\newcommand\myrepeat[2]{%
    \begingroup
    \lccode`m=`#2\relax
    \lowercase\expandafter{\romannumeral\number\number#1 000}%
    \endgroup
}
\newcounter{stepsindent}
\newcounter{stepscounter}
\newcommand{\stepcomment}[1]{\hfill\textit{\footnotesize #1}}
\newenvironment{steps}{%
    \newcommand*{\step}[1][EMPTY]
    {\ifthenelse{\equal{##1}{EMPTY}} {\item} {\item[##1]}
    \phantom{\myrepeat{\value{stepsindent}}{.}}
    }
    \newenvironment{indented}{\addtocounter{stepsindent}{3}}{\addtocounter{stepsindent}{-3}}
    \begin{list}
    {
            {\small \arabic{stepscounter}}.
    }
    {
        \usecounter{stepscounter}
        \setlength{\labelwidth}{2em}
        \setlength{\labelsep}{-0.2em}
        \setlength{\itemsep}{0pt}
        \setlength{\parsep}{1pt}
        \setlength{\leftmargin}{0em}
        \setlength{\rightmargin}{0cm}
        \setlength{\itemindent}{1.5em}
    }
}{\end{list}}
\newtheorem{observation}{Observation}[section]
\newcommand{\E}{\mathbb E}
\newcommand{\R}{\mathbb R}
\newcommand{\Rp}{\R_{\scriptscriptstyle 0}}
\newcommand{\Rpinf}{\overline{\R\!}\,_{\scriptscriptstyle 0}}
\newcommand{\eps}{\epsilon}
\newcommand{\giv}[1][]{\,{#1 |}\,}
\DeclareMathOperator{\minimize}{minimize}
\DeclareMathOperator{\maximize}{maximize}
\DeclareMathOperator{\NONE}{\mathsf{none}}
\newcommand{\none}{{\ensuremath{\NONE}}\xspace}
\DeclareMathOperator{\SAMPLE}{\mathsf{sample}}
\newcommand{\sample}{{\ensuremath{\SAMPLE}}\xspace}
\DeclareMathOperator{\GREEDY}{\mathsf{greedy}}
\newcommand{\greedy}{{\ensuremath{\GREEDY}}\xspace}
\newcommand{\capped}{\tilde c}
\newcommand*{\tran}{\bgroup\@ifstar{\mathpalette\@tran{\mkern-3.5mu}\egroup}{\mathpalette\@tran\relax\egroup}}
\newcommand*{\@tran}[2]{\setbox0=\hbox{\m@th$#1#2\intercal$}\raise\dp0\box0}
\newenvironment{LP}[1]{
    \begin{array}[t]{@{}l@{}}
    \displaystyle #1 \\
    \begin{aligned}
    \\[-10pt]}{
    \end{aligned}
    \end{array}
}
\newcommand{\qedhere}{}
\title{An improved approximation algorithm for $k$-Median}
\author{Neal E. Young\thanks{
    University of California, Riverside
}}
\date{\today}
\begin{document}



\maketitle

\begin{abstract}
    We give a polynomial-time approximation algorithm for the (not necessarily metric) $k$-Median problem.
    The algorithm is an \emph{$\alpha$-size-approximation} algorithm
    for $\alpha < 1+2\ln(n/k)$.
    That is, it guarantees a solution having size at most $\alpha\times k$,
    and cost at most the cost of any size-$k$ solution.
    This is the first polynomial-time approximation algorithm to match
    the well-known bounds of $H_{\Delta}$ and $1+\ln(n/\textsf{opt})$
    for unweighted Set Cover (a special case) within a constant factor.
    It matches these bounds within a factor of 2.
    The algorithm runs in time $O(k\,m\ln(n/k)\log m)$,
    where $n$ is the number of customers and $m$
    is the instance size.
\end{abstract}

\section{Introduction}\label{sec: intro}
An instance of the \emph{$k$-Median} problem
is given by an edge-weighted bipartite graph $G=(U, W, E)$,
where $U$ is the set of \emph{centers},
$W$ is the set of \emph{customers},
and each center/customer pair $(i,j) \in E \subseteq U\times W$
has an associated \emph{cost} $c_{ij} \ge 0$,
which we interpret as the cost of assigning customer $i$ to center $j$.
The goal is to choose a set $C\subseteq U$ of $k$ centers
of minimum \emph{cost}, defined to be $c(C)=\sum_{j\in W} \min_{i\in C} c_{ij}$,
with the interpretation that each customer is assigned to its closest center in $C$.
Let $n=|W|$ and $m=|E|$.

An \emph{$\alpha$-size-approximate solution}
is a set $C\subseteq W$ of size at most $\alpha\, k$
and cost at most the minimum cost of any size-$k$ solution.
An algorithm that guarantees such a solution
is an \emph{$\alpha$-size-approximation algorithm}.

The restriction of $k$-Median to zero-cost instances is equivalent to
the well-studied Set Cover problem---the Set Cover instance admits a cover of size $k$
if and only if the corresponding $k$-Median instance has a size-$k$ solution of cost zero.
Assuming P$\ne$NP,
this implies that
no polynomial-time approximation algorithm
for $k$-Median guarantees solutions of size $(1-\eps)\ln n$  (for any constant $\eps>0$)
with cost approximating the optimum within any finite factor~\cite{feige_threshold_1998}.
This motivates the study of polynomial-time bicriteria-approximation algorithms.

The first such algorithm, by Lin and Vitter,
produces solutions of size at most $(1+ 1/\eps)(1 + \ln n)k$
and cost at most $1+\eps$ times the minimum cost of any size-$k$ solution~\cite{lin_e_approximations_1992}.
Here $\eps>0$ is an input parameter that controls the tradeoff between size and cost.
The second algorithm, by Young, improves this tradeoff
by producing solutions of size at most $(1+\ln(n + n/\eps))k$
and cost at most $1+\eps$ times the minimum cost of any size-$k$ solution~\cite{young_k_medians_2000}.
The third, by Chrobak et al., incurs no tradeoff:
it guarantees solutions of size $O(\log n)$
and cost at most the minimum cost of any size-$k$ solution~\cite{chrobak_incremental_2008}.
That is, it is an $O(\log n)$-size-approximation algorithm as defined above.

\newcommand{\opt}{\textsc{opt}}
For the special case of unweighted Set Cover, stronger bounds are known
in terms of $k$ and the maximum set size $\Delta$.
Johnson~\cite{johnson_approximation_1974} and Lovasz~\cite{lovasz_ratio_1975} show
that the greedy algorithm has approximation ratio at most
$H_\Delta = \sum_{h=1}^\Delta 1/h$,
where $\Delta\le n$ is the maximum set size.
(Chvatal~\cite{chvatal_greedy_1979} extends this to weighted Set Cover.)
A folklore result (see \autoref{sec: poly time alg})
is that the ratio is at most
$1+\ln (n/\opt)$,
where $\opt$ is the optimal cover size.
Note that $n/\opt \le \Delta$ and $\gamma + \ln \Delta \le H_\Delta$
where $\gamma=0.577..$ is Euler's constant,
so this bound can be smaller than $H_\Delta$ but never exceeds it by more than $1-\gamma$.
(Slav\'{\i}k shows a bound of $\ln (n / \ln n) + O(1)$,
which is asymptotically stronger when $\opt = o(\log n)$~\cite{slavik_tight_1997}.)

\smallskip

Our main result (\autoref{cor: faster}) is a polynomial-time $\alpha$-size-approximation algorithm for $k$-Medians,
where $\alpha \le 1 + 2\ln(n/k) \le 2H_\Delta$.
(For $k$-Median $\Delta = \max\big\{|\{j\in W : (i,j)\in E\}| : i\in U\big\}$
is the maximum number of customers that any center can serve.
Note $n/k \le \Delta$ for any feasible instance.)
This matches the Set Cover bounds $H_\Delta$ and $1+\ln(n/\opt)$ within a factor of two.
No previous result matched either bound within any constant factor.

The only previous polynomial-time size-approximation algorithm~\cite{chrobak_incremental_2008}
requires solving the standard linear-program (LP) relaxation for $k$-Median
(\autoref{fig: k median LP}).
Our algorithm avoids that.
It runs in time $O(k\, m \log(n/k)\log m)$.

\newcommand{\SML}[1]{\text{\footnotesize \(#1\)}}

\newcommand{\kmedianLP}{
        {
        \begin{LP}{\minimize~ c\cdot y = \textstyle \sum_{ij} c_{ij}\, y_{ij}}
        \SML{(\forall i \in U,\, j \in W )} && x_i  \ge y_{ij} & {} \ge 0
        \\
        \SML{(\forall j \in W )} && \textstyle \sum_{i} y_{ij} & {} = 1
        \\
        && \textstyle \sum_{i} x_i & {} = k
        \end{LP}
    }
}

\newcommand{\kmedianDual}{
        {
        \begin{LP}{\maximize~ -k\,\mu + \textstyle \sum_{j}\delta_j}
        \SML{(\forall i\in U,\, j\in W)} && \pi_{ij} & \ge 0
        \\
        \SML{(\forall i\in U,\, j\in W)} && \pi_{ij} & \ge \delta_j - c_{ij}
        \\
        \SML{(\forall i\in U)} &&  \mu & \ge \textstyle \sum_{j} \pi_{ij}
        \end{LP}
    }
}

\begin{figure}
    \centering
    \begin{tabular}{c|c}
        \kmedianLP &
        \kmedianDual
    \end{tabular}
    \caption{
        The standard $k$-Median linear-program relaxation and its dual.
    }
    \label{fig: k median LP}
\end{figure}

\section{Preliminaries}\label{sec: preliminaries}
Let $\Rp = \{x \in \R : x \ge 0\}$ and $\Rpinf = \Rp\cup\{\infty\}$.

Fix a $k$-Median instance $G=(U,W, E)$.
For $(i,j)\in E$, let $c_{ij}\in \Rp$
denote the cost of assigning customer $j$ to center $i$.
To ease notation, take $c_{ij} = \infty$ for $(i,j)\in (U\times W) \setminus E$.

Fix  $T=\lceil k\ln(n^2/(2k(2k+1))\rceil$ and $\alpha_{kn} = T/k+2$,
and let $\lambda^*\in\Rpinf$ denote the optimum cost of the standard LP relaxation (\autoref{fig: k median LP}).
The algorithms here find solutions of cost at most $\lambda^*$
and size at most $T+2k = \alpha_{kn}\, k$.
These are $\alpha_{kn}$-size-approximate solutions,
because any size-$k$ solution has cost at least $\lambda^*$.

Assume without loss of generality that $2 \le k \le n/3$.
(If $k=1$, the optimal size-1 solution $C=\{\arg\min_{i\in U} \sum_{j\in W} c_{ij}\}$
can be computed in linear time.
If $k > n/3$, the size-$n$ solution $C=\{\arg\min_{i\in U} c_{ij} : j\in W\}$
has minimum possible cost and size $n \le \alpha_{nk} \,k$.)

Assume without loss of generality that every customer
$j\in W$ has a center $i\in U$ such that $c_{ij} = 0$.
(Otherwise,
for each customer $j$,  subtract $\min_{i'\in U} c_{i'j}$ from each $c_{ij}$.
This reduces the cost of each solution by the same non-negative amount,
$\sum_{j\in W} \min_{i\in U} c_{ij}$,
so any given solution achieves optimal cost for the original instance
if and only if it does so for the modified instance.)

\begin{lemma}
    \label{lemma: alpha}
    $\alpha_{kn} < 2\ln(n/k) + 2- 2\ln 2 + 1/(2k) + 1/(4k^2) < 1 + 2\ln (n/k) < 2H_{\Delta}$
\end{lemma}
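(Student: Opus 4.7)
My plan is to verify the three inequalities one at a time; each is a short calculation combining the ceiling identity, a Taylor bound for $\ln(1+x)$, and the classical inequality $H_\Delta \ge \ln\Delta + \gamma$ already noted in the introduction.

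For the first inequality, I start from $\alpha_{kn} = T/k + 2$ and the strict estimate $T < k\ln(n^2/(2k(2k+1))) + 1$ (since $\lceil x \rceil < x+1$ for every real $x$), which gives $\alpha_{kn} < \ln(n^2/(2k(2k+1))) + 1/k + 2$. The key rewrite is the identity $n^2/(2k(2k+1)) = (n/k)^2/\big(4(1 + 1/(2k))\big)$, turning the log into $2\ln(n/k) - 2\ln 2 - \ln(1 + 1/(2k))$. Applying the standard bound $\ln(1+x) > x - x^2/2$ (for $x>0$) with $x = 1/(2k)$ yields $\ln(1 + 1/(2k)) > 1/(2k) - 1/(8k^2)$. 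Substituting and collecting terms gives $\alpha_{kn} < 2\ln(n/k) + 2 - 2\ln 2 + 1/(2k) + 1/(8k^2)$, which implies the stated bound since $1/(8k^2) < 1/(4k^2)$.

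The middle inequality is equivalent to $1/(2k) + 1/(4k^2) < 2\ln 2 - 1 \approx 0.386$. Since $k \ge 2$ by the setup assumption, the left side is at most $1/4 + 1/16 = 5/16 = 0.3125$, which is strictly less than $2\ln 2 - 1$. For the rightmost inequality, the feasibility hypothesis $n/k \le \Delta$ reduces the claim to $H_\Delta > \ln\Delta + 1/2$, and the classical lower bound $H_\Delta \ge \ln\Delta + \gamma$ (already cited in the introduction) with $\gamma = 0.5772\ldots > 1/2$ settles this at once.

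The main obstacle, modest as it is, lies in the first inequality's bookkeeping: the $1/k$ slack coming from the ceiling must be trimmed down to the asserted $1/(2k) + O(1/k^2)$ using the quadratic (rather than linear) Taylor lower bound on $\ln(1 + 1/(2k))$, and one has to spot the factoring $n^2/(2k(2k+1)) = (n/k)^2/(4 + 2/k)$ that cleanly separates the leading term $2\ln(n/k) - 2\ln 2$ from the lower-order correction that cancels half of that slack.
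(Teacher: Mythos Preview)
Your argument is correct and follows essentially the same route as the paper: rewrite $\ln\bigl(n^2/(2k(2k+1))\bigr)$ to isolate $2\ln(n/k)-2\ln 2$ and then control the leftover lower-order term. The one difference is in the factorization. You write $n^2/(2k(2k+1)) = (n/k)^2/\bigl(4(1+1/(2k))\bigr)$, which leaves $-\ln(1+1/(2k))$ and forces you to invoke the quadratic lower bound $\ln(1+x)>x-x^2/2$ to recover the $1/(2k)$ cancellation. The paper instead writes the residual as $\ln(1-1/(2k+1))$ and applies only the linear upper bound $\ln(1+z)\le z$, obtaining directly $1/k-1/(2k+1)=1/(2k)+1/(2k(2k+1))<1/(2k)+1/(4k^2)$. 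So the paper's decomposition is marginally cleaner, but the two computations are equivalent in spirit. You also supply the easy second and third inequalities (via $k\ge 2$ and $H_\Delta\ge\ln\Delta+\gamma$), which the paper's appendix proof leaves implicit.
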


The proof of the lemma is in \autoref{sec: appendix}.

The capped cost, defined next, plays a central role in the algorithms.
\autoref{sec: intuition} gives some intuition for its definition.

\begin{definition}[capped cost]
    \label{def: capped cost}
    Given $\lambda\in\Rpinf$ and a set $C\subseteq W$ of centers,
    define the \emph{capped cost} of $C$ to be
    $\capped(\lambda, C) = \sum_{j\in W} \capped_j(\lambda, C)$,
    where \[\capped_j(\lambda, C)=\min\Big(1/(2k+1), (1-2k/n)\min_{i\in C} c_{ij}/\lambda\Big)\]
    is the \emph{capped cost of customer $j$} (with respect to $C$).
    In this context we interpret $0/0$ and $\infty/\infty$ as zero.
\end{definition}

We'll use the following utility lemma to work with capped costs.
It shows that the addition of a single center $i'$ to a given set $C$ of customers
can always decrease the capped cost by a certain amount.
The lemma implicitly gives a lower bound on $\lambda^*$.

\begin{lemma}
    \label{lemma: submodular}
    For any
    $\lambda\in\Rpinf$ and
    $C\subseteq W$,
    \[
        \min_{i \in U} \capped(\lambda, C\cup\{i'\})
        \le  (1-1/k)\capped(\lambda, C) + (1/k)(1-2k/n)\lambda^*/\lambda.
    \]
\end{lemma}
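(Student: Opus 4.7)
The plan is to run a probabilistic argument driven by the LP relaxation in \autoref{fig: k median LP}. Let $(x^*, y^*)$ be an optimal primal solution, so $\sum_i x^*_i = k$, $\sum_i y^*_{ij} = 1$ for every $j\in W$, $0 \le y^*_{ij} \le x^*_i$, and $\sum_{ij} c_{ij}\, y^*_{ij} = \lambda^*$. Sample $i' \in U$ from the distribution $\Pr[i' = i] = x^*_i / k$, which is a valid probability distribution by $\sum_i x^*_i = k$. Since $\min_{i' \in U} \capped(\lambda, C \cup \{i'\}) \le \E_{i'}[\capped(\lambda, C \cup \{i'\})]$, it suffices to bound the expectation.

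Fix $j \in W$. \autoref{def: capped cost} supplies two useful upper bounds on $\capped_j(\lambda, C \cup \{i'\})$: it is at most $\capped_j(\lambda, C)$ (adding a center only shrinks the inner minimum), and it is at most $(1 - 2k/n)\, c_{i'j} / \lambda$ (by taking $i'$ itself as the inner minimizer and dropping the $1/(2k+1)$ cap). Now decompose the probability mass
\[
\frac{x^*_{i'}}{k} \;=\; \frac{y^*_{i'j}}{k} \;+\; \frac{x^*_{i'} - y^*_{i'j}}{k},
\]
noting that both summands are nonnegative by the LP constraint $x^*_{i'} \ge y^*_{i'j}$. Apply the $c_{i'j}$-bound to the first piece and the $\capped_j(\lambda, C)$-bound to the residual. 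Because $\sum_{i'} y^*_{i'j} = 1$ and $\sum_{i'} x^*_{i'} = k$, the residual piece integrates to $(k-1)/k$, yielding
\[
\E_{i'}[\capped_j(\lambda, C \cup \{i'\})] \;\le\; \tfrac{k-1}{k}\, \capped_j(\lambda, C) \;+\; \tfrac{1}{k}\big(1 - \tfrac{2k}{n}\big)\, \tfrac{1}{\lambda}\sum_{i'} y^*_{i'j}\, c_{i'j}.
\]

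Summing this inequality over $j \in W$ and using $\sum_{j, i'} y^*_{i'j}\, c_{i'j} = \lambda^*$ reproduces the right-hand side of the lemma exactly. I expect the only delicate step to be the mass-splitting in the middle paragraph: it works precisely because the LP constraint $x_i \ge y_{ij}$ keeps both pieces nonnegative, and because the two natural upper bounds on $\capped_j(\lambda, C \cup \{i'\})$ pair off correctly against the $y^*$-piece (which weights against $c_{i'j}$ to reconstruct $\lambda^*$) and the residual piece (which totals $(k-1)/k$, giving the coefficient of $\capped_j(\lambda, C)$). Degenerate values of $\lambda$ or $\lambda^*$ are dispatched by the $0/0,\ \infty/\infty$ conventions of \autoref{def: capped cost}, with the claim being vacuous once $\lambda^*=\infty$.
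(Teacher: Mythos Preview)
Your proposal is correct and follows essentially the same approach as the paper: sample $i'$ from the distribution $x^*/k$, bound the minimum by the expectation, and for each customer $j$ combine the monotonicity bound $\capped_j(\lambda,C\cup\{i'\})\le\capped_j(\lambda,C)$ with the bound $\capped_j(\lambda,C\cup\{i'\})\le(1-2k/n)c_{i'j}/\lambda$, weighted so that the $y^*$-mass reconstructs $\lambda^*$ and the residual mass $(x^*_{i'}-y^*_{i'j})/k$ sums to $(k-1)/k$. The paper phrases the same computation as a two-stage experiment (after sampling $i'$, reassign $j$ to $i'$ with probability $y^*_{i'j}/x^*_{i'}$), but that second randomization is exactly your mass-splitting in probabilistic clothing; the algebra and the resulting per-customer inequality are identical.
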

We prove the lemma in the appendix.
For intuition
consider the case that the LP admits an optimal integer solution,
that is, a set $C^*\subseteq W$ of $k$ centers with cost $c(C^*) = \lambda^*$.
Adding all $k$ centers from $C^*$ to $C$ would reduce the capped cost
to at most $\capped(\lambda, C^*)$.
So, observing that $\capped(\lambda, C)$ (for fixed $\lambda$) is a submodular function,
there exists a center $i \in U$ to add
that reduces the capped cost by at least $(1/k)(\capped(\lambda, C) - \capped(\lambda, C^*))$.
With $\capped(\lambda, C^*) \le (1-2k/n)c(C^*)/\lambda = (1-2k/n)\lambda^*/\lambda$
and some algebra, this implies the lemma.
The full proof considers the optimal LP solution $(x^*, y^*)$ instead of $C^*$.

\begin{table}
    \small
    \label{table: notation}
    \begin{tabular}{r@{ -- }lr@{ -- }l}
        $\Rp$ & the non-negative reals
        &
        $\Rpinf$ & $\Rp\cup\{\infty\}$
        \\
        $c$ & costs for $k$-Median instance
        & $i, j$ & center $i \in U$, customer $j \in W $
        \\
        $\lambda^*$ & optimum LP cost
        & $n, m$ & $n=|W|$, $m$ = $|E|$
        \\
        $\lambda$ & capped-cost parameter
        &
        $C$ & set of centers
        \\
        $\capped(\lambda, C)$
        & capped cost of $C$
        &
        $\capped_j(\lambda, C)$
        & capped cost of customer $j$ (w.r.t.~$C$)
        \\[5pt]
        $T$ &
        number of centers from first phase: \hspace*{-8em}
        & \multicolumn{1}{r@{${}={}$}}{$T$}
        & \multicolumn{1}{@{}l}{$\lceil k\ln(n^2/(2k(2k+1))\rceil$}
        \\[2pt]
        $\alpha_{kn}$ &
        size-approximation ratio:
        & \multicolumn{1}{r@{${}={}$}}{$\alpha_{kn}$}
        & \multicolumn{1}{@{}l}{$T/k+2$} 
    \end{tabular}
    \caption{Notation}
\end{table}

\section{Slow algorithm}\label{sec: poly time alg}
This section describes a slow algorithm.
\autoref{sec: faster alg} will build on it to prove the main result.
The approach is similar in spirit to the folklore result that,
for unweighted Set Cover,
the greedy algorithm gives $(1+\ln(n/k))$-approximation,
where $k=\opt$ is the minimum set-cover size.
This can be shown as follows: each set chosen by the greedy algorithm covers
at least a $1/k$ fraction of the remaining elements, so after any iteration $t$
at most $n(1-1/k)^t < n e^{-t/k}$ elements remain uncovered.
In particular, after $t=\lceil {k\ln(n/k)}\rceil$ iterations
less than $n e^{-t/k} = k$ elements remain.
Each subsequent iteration covers at least 1 element,
so there are at most $k-1$ additional iterations,
for a total of at most $(\ln(n/k) + 1) k$.

Similarly, the $k$-Median algorithms here have two phases.
The first phase chooses up to $T=\lceil k\ln(n^2/(2k(2k+1)))\rceil$ centers greedily,
minimizing the capped cost with each choice.
We show (using \autoref{lemma: submodular})
that this generates a set $C$ whose capped cost is strictly less than 1.
The second phase then ``polishes'' the partial solution,
adding up to $2k$ additional centers,
each chosen greedily to reduce the capped cost of a particular customer to zero.
This suffices to obtain a full solution (of size at most $T+2k$)
whose true cost is at most the optimum LP cost $\lambda^*$.

\label{sec: first phase}
The goal of the first phase is to compute a set $C$ of at most $T$ centers
with capped cost $\capped(\lambda^*, C)$ strictly less than 1:

\begin{theorem}[first phase]
    \label{thm: first phase}
    A set $C\subseteq W$ of size at most $T$
    with $\capped(\lambda^*, C) < 1$
    can be computed in polynomial time.
\end{theorem}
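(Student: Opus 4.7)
The plan is to run a greedy algorithm: compute $\lambda^*$ by solving the LP relaxation (in polynomial time), set $C_0 = \emptyset$, and for $t = 1,\dots,T$ let $C_t = C_{t-1}\cup\{i_t\}$ where $i_t \in \arg\min_{i\in U}\capped(\lambda^*, C_{t-1}\cup\{i\})$. This mirrors the folklore Set Cover greedy analysis sketched at the start of \autoref{sec: poly time alg}, with \autoref{lemma: submodular} playing the role of the ``cover a $1/k$ fraction'' step.

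Let $\phi_t = \capped(\lambda^*, C_t)$. Since $\min_{i\in\emptyset} c_{ij} = +\infty$, every customer contributes $1/(2k+1)$ to $\phi_0$, so $\phi_0 = n/(2k+1)$. Applying \autoref{lemma: submodular} with $\lambda = \lambda^*$ at each iteration yields the recurrence
\[
\phi_t \,\le\, (1-1/k)\,\phi_{t-1} \,+\, (1-2k/n)/k,
\]
whose standard closed-form solution (fixed point $1-2k/n$) is
\[
\phi_T \,\le\, (1 - 2k/n) \,+\, (1-1/k)^T\bigl(\phi_0 - (1-2k/n)\bigr).
\]
Using $(1-1/k)^T \le e^{-T/k}$ together with $T = \lceil k\ln(n^2/(2k(2k+1)))\rceil$, one gets $e^{-T/k}\le 2k(2k+1)/n^2$, so $e^{-T/k}\cdot n/(2k+1)\le 2k/n$. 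Because $k \le n/3$ gives $2k/n < 1$, the factor $\phi_0 - (1-2k/n) = n/(2k+1) - 1 + 2k/n$ is strictly less than $n/(2k+1)$, so
\[
\phi_T \,<\, (1-2k/n) \,+\, e^{-T/k}\cdot n/(2k+1) \,\le\, 1,
\]
which is the desired strict bound.

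For the polynomial-time claim, each greedy step scans the edges once: for each candidate center $i$ the change in capped cost depends only on the edges $(i,j)\in E$, given the cached values $\min_{i'\in C_{t-1}} c_{i'j}$. The only delicate point I anticipate is extracting the strict inequality $\phi_T < 1$ (rather than the weaker $\phi_T \le 1$) out of the recurrence; everything else is a direct transcription of the Set Cover greedy argument, with \autoref{lemma: submodular} substituted for the ``cover-progress'' step.
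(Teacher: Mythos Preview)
Your proposal is correct and essentially identical to the paper's proof: both solve the LP for $\lambda^*$, run the greedy rule $C_t = C_{t-1}\cup\{\arg\min_i \capped(\lambda^*, C_{t-1}\cup\{i\})\}$, invoke \autoref{lemma: submodular} to get the recurrence $\phi_t \le (1-1/k)\phi_{t-1} + (1-2k/n)/k$, and then plug $T$ into the closed form $\phi_t \le p^t\,n/(2k+1) + (1-p^t)(1-2k/n)$ to obtain $\phi_T < 1$. The only cosmetic difference is that the paper stops the loop early once $\capped(\lambda^*,C)<1$ while you run all $T$ iterations, and you extract strictness from $\phi_0-(1-2k/n) < n/(2k+1)$ whereas the paper uses $p^T < e^{-T/k}$; both are valid.
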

\begin{proof}
    The algorithm solves the LP to obtain $\lambda^*$,
    then chooses centers greedily,
    in each step decreasing the capped cost $\capped(\lambda^*, C)$
    as much as possible,
    just until the capped cost is less than 1:
    \begin{steps}
        \step solve the LP to obtain the optimum cost $\lambda^*$
        \step let $C\gets \emptyset$
        \step while $\capped(\lambda^*, C) \ge 1$:
        \begin{indented}
            \step let $i'=\arg\min_{i \in U} \capped(\lambda^*, C\cup\{i\})$
            \step let $C\gets C \cup\{i'\}$
        \end{indented}
        \step return $C$
    \end{steps}

    \smallskip

    Let $C_t$ denote the set $t$ at the end of each iteration $t$.
    Let $C_0=\emptyset$.
    Let $p = 1-1/k$.
    From \autoref{lemma: submodular} (with $\lambda=\lambda^*$), and using $\capped(\lambda,\emptyset)=n/(2k+1)$,
    it follows inductively that the algorithm maintains the invariant
    \begin{equation}
        \label{eq: invariant}
        \capped(\lambda^*, C_t)
        \le p^t \,n/(2k+1) + (1-p^t)(1-2k/n).
    \end{equation}

    Thus, the loop cannot iterate more than $T=\lceil k \ln(n^2/(2k(2k+1)))\rceil$ times,
    because if it reaches iteration $T$,
    then, after that after that iteration,
    using $p^T < \exp(-T/k) \le 2k(k+1)/n^2$,
    the invariant implies
    \[
        \capped(\lambda^*, C_T)
        < \frac{2k(2k+1)}{n^2}\times \frac{n}{2k+1} + 1 - \frac{2k}n = 1.
        \qedhere
    \]
\end{proof}

\label{sec: polish}
The first phase (by \autoref{thm: first phase})
computes a set $C\subseteq W$ of size at most $T$
such that the capped cost $\capped(\lambda^*, C)$ is strictly less than $1$.
Given this $C$,
the second phase
computes the desired $\alpha_{nk}$-size-approximate solution:

\begin{theorem}[second phase]
    \label{thm: second phase}
    Given any set $C\subseteq U$ of at most $T$ centers, and
    $\lambda\in\Rpinf$ such that $\capped(\lambda, C)<1$,
    a set $C'\subseteq U$ of size at most $T+2k$
    and cost at most $\lambda$
    can be computed in time $O(m + n\log n)$.
\end{theorem}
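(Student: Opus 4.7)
The plan is to take $B$ to be a small set of ``expensive'' customers and add a zero-cost center for each one. By the normalization at the start of \autoref{sec: preliminaries}, every customer $j\in W$ has some $i\in U$ with $c_{ij}=0$, so this is always possible. Concretely, let $d_j = \min_{i\in C} c_{ij}$, pick $B$ to be the (up to) $2k$ customers of largest $d_j$ (ties broken arbitrarily), choose for each $j\in B$ some $i_j\in U$ with $c_{i_j j}=0$, and return $C'=C\cup\{i_j : j\in B\}$. Immediately $|C'|\le T+2k$.

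For the cost bound, each $j\in B$ contributes $0$ in $C'$, so the total cost is at most $\sum_{j\notin B} d_j$. The key observation is that every $j\notin B$ satisfies $\capped_j(\lambda,C)<1/(2k+1)$, whence $d_j = \lambda\,\capped_j(\lambda,C)/(1-2k/n)$. To see this, note that strictly fewer than $2k+1$ customers can attain the cap $1/(2k+1)$, since otherwise $\capped(\lambda,C)\ge 1$ would contradict the hypothesis; and because $\capped_j$ is a monotone nondecreasing function of $d_j$, the top-$2k$ choice by $d_j$ is also a top-$2k$ choice by $\capped_j$, so $B$ contains every capped customer. Substituting gives
\[
    \textstyle\sum_{j\notin B} d_j \;=\; \frac{\lambda}{1-2k/n}\sum_{j\notin B}\capped_j(\lambda,C),
\]
so it suffices to show $\sum_{j\notin B}\capped_j(\lambda,C)\le 1-2k/n$.

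This last inequality is a one-line averaging fact: if nonnegative values are sorted in decreasing order, the top $2k$ of $n$ capture at least a $2k/n$ fraction of the total (the average of any top prefix is at least the overall average). Applied to the sorted $\capped_j$'s, the complement of $B$ captures at most a $1-2k/n$ fraction of $\capped(\lambda,C)<1$, and hence its sum is strictly less than $1-2k/n$. Combined with the previous display, the total cost of $C'$ is strictly less than $\lambda$.

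For the running time, computing all $d_j$ takes $O(m)$ time (one scan of the edges incident to $C$), a single pass over $E$ identifies a zero-cost center $i_j$ for every $j$ in $O(m)$ time, and selecting the $2k$ customers of largest $d_j$ takes $O(n\log n)$ by sorting (or faster via partial selection). This matches the claimed $O(m+n\log n)$ bound. The main subtlety---and the point where the specific constants $1/(2k+1)$ and $1-2k/n$ in \autoref{def: capped cost} earn their keep---is that the $(1-2k/n)$ ``leakage'' baked into the denominator of the capped cost cancels exactly against the $2k/n$ share absorbed by the top-$2k$ averaging, producing the clean bound $\lambda$ with no slack to spare.
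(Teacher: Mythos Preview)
Your proof is correct and follows essentially the same approach as the paper's: add a zero-cost center for each of the $2k$ customers with largest current assignment cost, then use the averaging bound (the top $2k$ of $n$ nonnegative values carry at least a $2k/n$ share of the sum) together with the observation that at most $2k$ customers can hit the cap $1/(2k+1)$. The paper phrases this as an iterative while-loop and packages the averaging argument as two separate observations, whereas you do it in one batch and sort by $d_j$ rather than $\capped_j$, but monotonicity of $\capped_j$ in $d_j$ makes these equivalent.
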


\begin{corollary}
    \label{cor: poly time}
    $K$-Median admits a polynomial-time $\alpha_{nk}$-size-approximation algorithm.
\end{corollary}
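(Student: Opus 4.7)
The plan is to chain the two phases together, taking the set $C$ produced by \autoref{thm: first phase} as input to \autoref{thm: second phase} with $\lambda = \lambda^*$.

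First I would invoke \autoref{thm: first phase} to compute, in polynomial time, a set $C \subseteq W$ with $|C| \le T$ and $\capped(\lambda^*, C) < 1$. This step already includes solving the LP to obtain $\lambda^*$, so $\lambda^*$ is in hand.

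Next I would apply \autoref{thm: second phase} with this $C$ and $\lambda = \lambda^*$. Its hypotheses are satisfied: $|C| \le T$ and $\capped(\lambda^*, C) < 1$. The conclusion gives, in $O(m + n\log n)$ time, a set $C' \subseteq U$ of size at most $T + 2k$ with $c(C') \le \lambda^*$.

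Finally I would verify the two required bounds. For the size, by the definition $\alpha_{kn} = T/k + 2$ in \autoref{sec: preliminaries}, we have $|C'| \le T + 2k = \alpha_{kn}\, k$. For the cost, $\lambda^*$ is the optimum value of the LP relaxation of $k$-Median (\autoref{fig: k median LP}), so it is a lower bound on the cost of any integer size-$k$ solution; hence $c(C') \le \lambda^*$ is at most the minimum cost of any size-$k$ solution. Both phases run in polynomial time (the first phase solves the LP, the second is near-linear), so the composition is polynomial.

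There is no real obstacle here beyond sanity-checking the arithmetic $T + 2k = \alpha_{kn}\, k$ and confirming that the precondition of \autoref{thm: second phase} matches the postcondition of \autoref{thm: first phase} verbatim when $\lambda = \lambda^*$; both are immediate from the definitions in \autoref{sec: preliminaries}. The genuine work is all deferred to the two theorems being composed.
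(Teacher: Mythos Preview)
Your proposal is correct and is exactly the argument the paper has in mind: the corollary is simply the composition of \autoref{thm: first phase} (to obtain $C$ with $\capped(\lambda^*,C)<1$) and \autoref{thm: second phase} (with $\lambda=\lambda^*$), together with the observation from \autoref{sec: preliminaries} that a solution of size $T+2k=\alpha_{kn}k$ and cost at most $\lambda^*$ is an $\alpha_{kn}$-size-approximate solution. The paper does not write out a separate proof of the corollary beyond the surrounding prose, so your write-up is in fact more explicit than the original.
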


Before we prove the theorem,
note for intuition that $C$ can have at most $2k$ customers with $\capped_j(\lambda, C) = 1/(2k+1)$,
simply because each unassigned customer contributes $1/(2k+1)$ to $\capped(\lambda, C)$,
which is less than 1.
So by adding one center $i$ with $c_{ij}=0$ for each unassigned customer,
we could assure that all customers are assigned.
Similarly, by adding such centers for the $2k$ customers with maximum capped cost,
we could reduce the capped cost by a factor of $1-2k/n$, to less than $1-2k/n$,
which (by inspection of the capped cost) is just enough to ensure that the true cost is at most $\lambda$.
Naively, accomplishing both of above goals would take $4k$ additional centers.
The theorem shows that to accomplish both goals it suffices to add just $2k$ centers.

\begin{proof}[Proof of \autoref{thm: second phase}]
    Recall $k\le n/3$.
    Given $C$, compute $C'$ as follows:
    \begin{steps}
        \step while $c(C) > \lambda$:
        \begin{indented}
            \step let $j = \arg\max_{j\in W} \capped_j(\lambda, C)$
            \step let $C \gets C\cup \{i'\}$ where $i'=\arg\min_{i\in U} c_{ij}$
            \stepcomment{--- note: now $c_{i'j} = \capped_j(\lambda, C) = 0$}.
        \end{indented}
        \step return $C$
    \end{steps}

    This can be done in $O(m + n\log n)$ time
    by presorting the set $W$ of customers by decreasing $\capped(\lambda, C)$.
    To finish, we show that the loop iterates at most $2k$ times.
    Assume it iterates at least $2k$ times (otherwise we are done).

    \begin{observation}
        \label{obs: max ave}
        For any vector $b \in \Rp^n$ such that $\sum_j b_j < 1$,
        let $b'$ be $b$ with its $2k$ largest values replaced by zero.
        Then
        \[
            \max_{j \in W } b'_j < 1/(2k+1), \text{~~and~~}
            \sum_{j\in W} b'_j < 1 - 2k/n.
        \]
    \end{observation}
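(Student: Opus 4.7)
The plan is to sort the entries of $b$ in decreasing order, writing $b_{(1)} \ge b_{(2)} \ge \cdots \ge b_{(n)}$, and then observe that the nonzero entries of $b'$ are exactly $b_{(2k+1)}, \ldots, b_{(n)}$. Both inequalities then reduce to elementary facts about how the top $2k$ entries compare to what remains.

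For the first bound, I would note that the sum of the top $2k+1$ entries is at most $\sum_j b_j < 1$, and each of these entries is at least $b_{(2k+1)}$. Hence $(2k+1) \, b_{(2k+1)} \le \sum_{i=1}^{2k+1} b_{(i)} < 1$, giving $b_{(2k+1)} < 1/(2k+1)$. Since $\max_{j\in W} b'_j = b_{(2k+1)}$, this establishes the first inequality.

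For the second bound, let $S = \sum_{j\in W} b'_j = \sum_{i=2k+1}^n b_{(i)}$, a sum of $n-2k$ terms whose average is $S/(n-2k)$. Each of the top $2k$ entries $b_{(1)}, \ldots, b_{(2k)}$ is at least $b_{(2k+1)}$, which is at least this average, so $\sum_{i=1}^{2k} b_{(i)} \ge 2k \cdot S/(n-2k)$. Combined with $\sum_{i=1}^{2k} b_{(i)} + S = \sum_i b_i < 1$, this yields $S \cdot \bigl(1 + 2k/(n-2k)\bigr) < 1$, i.e., $S \cdot n/(n-2k) < 1$, so $S < 1 - 2k/n$, as desired.

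Neither step looks like a real obstacle; the only subtlety is making sure the ``average-vs-max'' comparison in the second part is set up correctly, since the naive bound $S \le (n-2k)\max_j b'_j < (n-2k)/(2k+1)$ is too weak when $n \gg 2k$. The fix is to use the entire budget $\sum_j b_j < 1$, exploiting that the $2k$ discarded entries must themselves absorb at least the fraction $2k/n$ of the total mass relative to $S$, which is precisely what the averaging argument above formalizes.
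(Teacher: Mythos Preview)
Your proof is correct and follows essentially the same approach as the paper. The first bound is argued identically; for the second, the paper phrases the same averaging idea probabilistically---zeroing $2k$ uniformly random indices leaves an expected $(1-2k/n)$-fraction of the sum, and zeroing the $2k$ largest does at least as well---which is equivalent to your direct comparison of the top $2k$ entries to the average of the remaining $n-2k$.
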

    (Indeed, the maximum value in $b'$ is the minimum of the $2k+1$ largest values in $b$,
    which is at most the average of those values,
    which is at most $\sum_{b=1}^n b / (2k+1) < 1/(2k+1)$.
    So $\max_j b'_j$ is bounded as claimed.
    Zeroing $2k$ random values in $b$ would decrease its sum by a factor of $1-2k/n$ in expectation.
    Zeroing the $2k$ largest values decreases its sum by at least that.
    So $\sum_j b'_j$ is bounded as claimed.)

    \begin{observation}
        Let $C_{2k}$ be $C$ after $2k$ iterations of the loop.
        Then
        \label{obs: polished capped cost}
        \[
            \max_{j \in W } \capped_j(\lambda, C_{2k}) < 1/(2k+1), \text{~and~~}
            \capped(\lambda, C_{2k}) < 1-2k/n.
        \]
    \end{observation}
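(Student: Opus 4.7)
The plan is to track how the vector of capped costs evolves through the loop and to reduce the claim to \autoref{obs: max ave}. Write $C_t$ for $C$ after $t$ iterations, set $b^{(t)}_j = \capped_j(\lambda, C_t)$, and let $M_t = \max_{j\in W} b^{(t)}_j$. Because $C_{t-1}\subseteq C_t$ and $\min_{i\in C} c_{ij}$ only decreases as $C$ grows, $b^{(t)}\le b^{(t-1)}$ holds pointwise, so $M_t$ is nonincreasing in $t$. Iteration $t$ picks $j_t=\arg\max_j b^{(t-1)}_j$ and adds a center $i'$ with $c_{i'j_t}=0$ (such a center exists by the preliminaries), which forces $b^{(t)}_{j_t}=0$. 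Let $S=\{j_1,\dots,j_{2k}\}$.

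From these facts two inequalities drop out: for each $j'=j_t\in S$, $b^{(0)}_{j'}\ge b^{(t-1)}_{j'}=M_{t-1}\ge M_{2k}$; and for each $j\notin S$, $b^{(2k)}_j\le M_{2k}$. I would then define a hybrid vector $\tilde b\in\Rp^n$ by $\tilde b_j=b^{(0)}_j$ for $j\in S$ and $\tilde b_j=b^{(2k)}_j$ for $j\notin S$. By the hypothesis of the theorem, $\sum_j \tilde b_j\le \sum_j b^{(0)}_j=\capped(\lambda, C_0)<1$. The two inequalities above say that every entry of $\tilde b$ indexed by $S$ is at least $M_{2k}$ while every entry indexed outside $S$ is at most $M_{2k}$, so $S$ may be taken as the set of $2k$ largest coordinates of $\tilde b$.

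Applying \autoref{obs: max ave} to $\tilde b$ then yields a vector $b'$ (namely $\tilde b$ with its $2k$ largest coordinates zeroed out) satisfying $\max_j b'_j<1/(2k+1)$ and $\sum_j b'_j<1-2k/n$. Since $S$ is the zeroed-out set and $b^{(2k)}_j=0$ already for $j\in S$, the vector $b'$ coincides entrywise with $b^{(2k)}$. Consequently $\max_{j\in W}\capped_j(\lambda, C_{2k})<1/(2k+1)$ and $\capped(\lambda, C_{2k})<1-2k/n$, which is the observation.

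The only real subtlety is the identification of $b^{(2k)}$ with $b'$. The greedy does not in general zero the top-$2k$ coordinates of the original capped-cost vector $b^{(0)}$, because each iteration also shrinks the capped costs of customers outside $S$ and so can change which coordinate is maximal next. The hybrid $\tilde b$ circumvents this, and the monotonicity of $M_t$ is exactly what places every index in $S$ on top of $\tilde b$ once we substitute $b^{(2k)}$ for $b^{(0)}$ on the complement of $S$.
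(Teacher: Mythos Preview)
Your argument is correct and follows the same overall plan as the paper: reduce the claim to \autoref{obs: max ave} applied to a suitable vector of capped costs. The difference is that the paper applies \autoref{obs: max ave} directly to the initial vector $b^{(0)}$ and asserts the pointwise bound $\capped_j(\lambda,C_{2k})\le b'_j$, where $b'$ is $b^{(0)}$ with its $2k$ largest entries zeroed. As you correctly point out in your last paragraph, the greedy loop need not zero the top-$2k$ coordinates of $b^{(0)}$ (adding a center can shuffle which customer has maximal capped cost), so that pointwise inequality can fail on coordinates in the original top-$2k$ set that are never selected. Your hybrid vector $\tilde b$---equal to $b^{(0)}$ on $S$ and to $b^{(2k)}$ off $S$---neatly sidesteps this: the monotonicity of $M_t$ forces $S$ to be a valid top-$2k$ set for $\tilde b$, and then $b'$ coincides with $b^{(2k)}$ exactly, not just as an upper bound. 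So your route is essentially the paper's, but with a genuine repair of a glossed-over step; the only thing worth tightening is the tacit assumption that $|S|=2k$, which follows since each $j_t$ has $b^{(t-1)}_{j_t}>0$ (the loop guard $c(C)>\lambda$ guarantees a positive capped cost) while $b^{(t-1)}_{j_s}=0$ for $s<t$.
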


    (To see this, let $C_0$ refer to $C$ as given, before the loop executes,
    and define $b$ in $\Rpinf^n$ by $b_j = \capped_j(\lambda, C_0)$.
    Let $b'$ be obtained from $b$ by replacing its $2k$ largest values by zero.
    By the definition of $C_{2k}$, we have $\capped_j(\lambda, C_{2k}) \le b_j'$.
    The observation follows follows from \autoref{obs: max ave},
    along with
    $\sum_j b_j = \capped(\lambda, C_0) < 1$,
    and $\max_{j \in W } \capped_j(\lambda, C_{2k}) \le \max_{j \in W } b'_j$,
    and $\capped(\lambda, C_{2k}) \le \sum_j b'_j$.)

    \smallskip

    Finally,
    by the definition of the capped cost,
    the bound on $\max_j \capped_j(\lambda, C_{2k})$ in \autoref{obs: polished capped cost}
    implies that $\capped(\lambda, C_{2k}) = (1-2k/n)c(C_{2k})$.
    This and the bound on $\capped(\lambda, C_{2k})$  in \autoref{obs: polished capped cost}
    imply $c(C_{2k}) \le \lambda$, so that the loop terminates after iteration $2k$.
    This proves \autoref{thm: second phase}.
\end{proof}

\section{Fast algorithm}\label{sec: faster alg}
The running time of the algorithm in \autoref{cor: poly time}
is dominated by the time to compute the optimal LP solution.
But the algorithm uses only a single parameter of that solution, namely its cost $\lambda^*$.
This section builds on that to give our main result---a faster algorithm.
It replaces the first phase by an algorithm
that, instead of solving the LP,
somehow computes a pair $(\lambda, C)$
such that $\lambda \le \lambda^*$
and $\capped(\lambda, C) < 1$:

\begin{theorem}
    \label{thm: faster}
    Given just the instance $c$,
    in
    $O(k\,m\log(n/k)\log m)$ time
    one can compute a $\lambda\le \lambda^*$
    and a set $C\subseteq U$ of size at most $T$
    such that $\capped(\lambda, C) < 1$.
\end{theorem}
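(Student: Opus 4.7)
The plan is to replace the LP solve in the first phase by a search over candidate values of $\lambda$, simulating the greedy algorithm of \autoref{thm: first phase} with each guess in place of $\lambda^*$. Two observations drive the approach. First, the analysis of \autoref{thm: first phase} uses only \autoref{lemma: submodular}, and its invariant
$\capped(\lambda, C_t) \le (1-1/k)^t \, n/(2k+1) + (1 - (1-1/k)^t)(1-2k/n)\lambda^*/\lambda$
goes through for arbitrary $\lambda$. When $\lambda \ge \lambda^*$ the right-hand side is strictly below $1$ after $T$ iterations, so greedy still succeeds. Second, the same invariant gives a failure certificate: if greedy runs $T$ iterations without driving $\capped(\lambda, C_T)$ below $1$, then rearranging forces $\lambda^*/\lambda > 1/(1 - (1-1/k)^T) > 1$, i.e., $\lambda^* > \lambda$. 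Hence greedy failure at $\lambda$ certifies $\lambda < \lambda^*$.

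Next I would do a binary search over the $O(m)$ natural candidate values $c_{ij}(2k+1)(1-2k/n)$ for $(i,j)\in E$, which are the breakpoints of the piecewise-linear map $\lambda \mapsto \capped_j(\lambda, C)$ (for any $C$) at which a customer's capped cost transitions between the saturated value $1/(2k+1)$ and the unsaturated value. Between consecutive breakpoints the greedy trajectory is locally constant in $\lambda$, and $\capped(\lambda, C_T)$ is continuous and monotone decreasing. Each trial runs greedy for $T = O(k\log(n/k))$ iterations; each iteration selects the best next center in $O(m)$ time via a single pass over the edges that maintains the running values of $\min_{i \in C} c_{ij}$ per customer. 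With $O(\log m)$ trials, the total time is $O(km\log(n/k)\log m)$, matching the theorem.

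The hardest part will be showing that the binary search actually identifies a valid pair $(\lambda, C)$. The subtlety is that ``greedy succeeds at $\lambda$'' is not obviously a monotone predicate in $\lambda$, since greedy's choices depend on $\lambda$. One clean route is to let the binary search converge to a consecutive pair of breakpoints $\lambda_- < \lambda_+$ with failure at $\lambda_-$ (certifying $\lambda^* > \lambda_-$) and success at $\lambda_+$ with greedy output $C$. Because the greedy trajectory is constant on the open interval $(\lambda_-, \lambda_+)$, solving $\capped(\lambda, C) = 1$ yields an explicit crossing point $\lambda^{\dagger} \in (\lambda_-, \lambda_+)$, and the strict success of greedy at $\lambda^*$ (\autoref{thm: first phase}) combined with the failure at $\lambda_-$ forces $\lambda^* > \lambda^{\dagger}$. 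Any $\lambda \in (\lambda^{\dagger}, \min(\lambda_+, \lambda^*)]$ satisfies both requirements; returning $\lambda$ just above $\lambda^{\dagger}$, together with $C$, completes the construction.
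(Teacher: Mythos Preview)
Your outer-binary-search approach has a genuine gap: the claim that ``between consecutive breakpoints the greedy trajectory is locally constant in $\lambda$'' is false, and the final step of your argument depends on it. Between consecutive breakpoints $\beta_\ell < \beta_{\ell+1}$, it is true that for every fixed set $C'$ each $\capped(\lambda,C')$ has the affine form $A(C')/\lambda + B(C')$. But greedy compares $\capped(\lambda, C\cup\{i\})$ across different candidate centers $i$, and these affine functions have different slopes $A(C\cup\{i\})$ and intercepts $B(C\cup\{i\})$. Two such lines cross at $\lambda = (A_1-A_2)/(B_2-B_1)$, which in general is \emph{not} one of the $\beta_{ij}$. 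Hence the $\arg\min_i$ can flip strictly inside an interval, and greedy run at two values of $\lambda$ in $(\beta_\ell,\beta_{\ell+1})$ can choose different first centers, after which the trajectories diverge. (A small concrete instance with two centers and a customer whose saturation status differs between them already exhibits such a crossing.) Consequently, even after you locate consecutive breakpoints $\lambda_-<\lambda_+$ with failure at $\lambda_-$ and success at $\lambda_+$, the set $C$ produced at $\lambda_+$ need not coincide with greedy's output at any $\lambda\in(\lambda_-,\lambda_+)$; in particular you cannot conclude that $\capped(\lambda,C)<1$ for some $\lambda\le\lambda^*$. Your crossing-point argument for $\lambda^\dagger$ collapses without this identification, and there is no evident bound on how many extra ``argmin-crossing'' breakpoints you would need to refine to restore constancy (they depend on the trajectory, which depends on $\lambda$).

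The paper avoids this issue by \emph{not} fixing $\lambda$ for an entire greedy run. It runs greedy once and, in each iteration $t$, sets a target decrease $\tau$ and increases $\lambda$ to the smallest $\lambda_t\ge\lambda_{t-1}$ at which \emph{some} center achieves $\capped(\lambda_t,C\cup\{i\})\le\tau$; the binary search over the breakpoints happens inside each iteration, not outside the whole run. The point is that \autoref{lemma: submodular} certifies directly that whenever no center achieves $\tau$ at a given $\lambda$, one has $\lambda<\lambda^*$, so each $\lambda_t$ stays $\le\lambda^*$ by construction---no trajectory-constancy argument is needed. If you want to salvage your outer-search idea, you would need either a monotonicity statement for the predicate ``greedy at $\lambda$ succeeds'' (which is not obvious and may be false), or a polynomial bound on the number of $\lambda$-values at which the greedy trajectory changes; neither is supplied.
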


With \autoref{thm: second phase}, this will give the following result:

\begin{corollary}
    \label{cor: faster}
    $K$-Median admits an $\alpha_{nk}$-size-approximation algorithm
    that runs in $O(k\,m\log(n/k)\log m)$ time.
\end{corollary}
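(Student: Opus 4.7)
The plan is to derive the corollary as a direct composition of \autoref{thm: faster} with \autoref{thm: second phase}, treating the former as given. First I would run the algorithm of \autoref{thm: faster} on the given instance $c$ to produce, in $O(k\,m\log(n/k)\log m)$ time, a pair $(\lambda, C)$ with $\lambda\le \lambda^*$, $|C|\le T$, and $\capped(\lambda, C)<1$. These are precisely the hypotheses required by \autoref{thm: second phase}, so in the second step I would feed $(\lambda, C)$ into the polish procedure from that theorem. In an additional $O(m+n\log n)$ time this returns a superset $C'\supseteq C$ with $|C'|\le T+2k = \alpha_{kn}\,k$ and true cost $c(C')\le \lambda$.

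For correctness, I would chain $c(C')\le \lambda \le \lambda^*$ and recall that $\lambda^*$ is the optimum cost of the standard LP relaxation (\autoref{fig: k median LP}), hence a lower bound on the cost of any (integer) size-$k$ solution. Combined with the size bound $|C'|\le \alpha_{kn}\,k$ inherited from the second phase, this is exactly the definition of an $\alpha_{kn}$-size-approximate solution.

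For the running time, I would verify that $O(m+n\log n)$ is absorbed into $O(k\,m\log(n/k)\log m)$. The standing reduction that every customer has some zero-cost center forces $m\ge n$, and the standing assumptions $2\le k\le n/3$ give $k\log(n/k)\ge 2\log 3 > 1$, so $k\,m\log(n/k)\log m = \Omega(m\log m) = \Omega(m+n\log n)$; the total time is therefore $O(k\,m\log(n/k)\log m)$ as claimed. The only substantive ingredient in this argument is \autoref{thm: faster} itself---the main obstacle lies entirely in its proof (in particular, in obtaining a usable lower bound $\lambda\le \lambda^*$ without solving the LP)---while the corollary proper requires nothing beyond plugging one theorem into the other.
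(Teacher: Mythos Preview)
Your proposal is correct and matches the paper's own approach: the paper states the corollary immediately after \autoref{thm: faster} with the remark ``With \autoref{thm: second phase}, this will give the following result,'' treating the composition of the two theorems as the entire proof. Your additional care in verifying that the $O(m+n\log n)$ polishing time is absorbed by the first-phase bound is a welcome detail that the paper leaves implicit.
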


\begin{figure}
    \begin{steps}
        \step[] {\hspace*{-1.8em}$\greedy'(c)$}
        \stepcomment{--- input: cost vector $c$}
        \step let $C = \emptyset$ and $\lambda_0 = 0$ and $t\gets 0$
        \step while $\capped(\lambda_{t}, C) \ge 1$:
        \begin{indented}
            \step let $t \gets t + 1$
            \step let $\tau \gets (1-1/k)\capped(\lambda_{t-1}, C) + (1/k)(1-2k/n)$
            \step let $\lambda_t = \min\big\{\lambda\ge \lambda_{t-1}
            : \min_{i \in U} \capped(\lambda, C\cup\{i\}) \le \tau\big\}$
            \step choose $i'$ such that $\capped(\lambda_t, C\cup\{i'\}) \le \tau$
            \step let $C \gets C \cup \{i'\}$
        \end{indented}
        \step return $(\lambda_t, C)$
    \end{steps}
    \caption{Faster replacement for first phase.}
    \label{fig: pure greedy}
\end{figure}

\begin{proof}[Proof of \autoref{thm: faster}]
    The algorithm to compute $(\lambda, C)$ is $\greedy'(c)$ in \autoref{fig: pure greedy}.

    \paragraph{Correctness.}
    Consider executing $\greedy'(c)$.
    Let $C_t$ denote $C$ at the end of iteration $t$ (and $C_0=\emptyset$).
    We'll show that the algorithm maintains the invariant
    \begin{equation}
        \label{eq: faster invariant}
        \lambda_t \le \lambda^* \text{ and }
        \capped(\lambda_t, C_t) \le p^t n/(2k+1) + (1-p^t)(1-2k/n)
    \end{equation}
    Note the similarity to \eqref{eq: invariant} in the proof of \autoref{thm: first phase}.

    Invariant~\eqref{eq: faster invariant} is true initially because $\lambda_0 = 0$
    and $\capped(\lambda_0, \emptyset) = n/(2k+1)$.

    Suppose the invariant holds just before a given iteration $t$.
    That is,
    \[\lambda_{t-1} \le \lambda^* \text{ and }
    \capped(\lambda_{t-1}, C_{t-1}) \le p^{t-1} n/(2k+1) + (1-p^{t-1})(1-2k/n).\]

    First we argue that $\lambda_t \le \lambda^*$.
    In the case that $\lambda_t = \lambda_{t-1}$, this follows from $\lambda_{t-1} \le \lambda^*$.
    Otherwise, consider any $\lambda$ in the half-open interval $[\lambda_{t-1}, \lambda_t)$.
    The algorithm's choice of $\lambda_t$
    (using that $\capped(\lambda, C)$ is non-increasing with $\lambda$)
    ensures
    \begin{align*}
        \min_{i \in U} \capped(\lambda, C_{t-1}\cup\{i\})
        & >
        (1-1/k)\capped(\lambda_{t-1}, C_{t-1}) + (1/k)(1-2k/n)
        \\
        & \ge (1-1/k)\capped(\lambda, C_{t-1}) + (1/k)(1-2k/n),
    \end{align*}
    which, with \autoref{lemma: submodular},
    implies that $\lambda^*/\lambda > 1$.
    So $\lambda<\lambda^*$ for all $\lambda$ in the non-empty interval $[\lambda_{t-1}, \lambda_t)$.
    The desired bound $\lambda^* \ge \lambda_t$ follows.

    To finish showing that~\eqref{eq: faster invariant} is maintained
    we bound $\capped(\lambda_t, C_t)$.
    Recall that $p=1-1/k$.
    The choices of $i'$ and $\tau$, and the invariant at time $t-1$, give
    \begin{align*}
        \capped(\lambda, C_t\cup\{i'\})
        & \le
        p\, \capped(\lambda_{t-1}, C_{t-1}) + (1-p)(1-2k/n)
        \\
        & \le p\big[p^{t-1}n/(2k+1) + (1-p^{t-1})(1-2k/n)\big]+ (1-p)(1-2k/n)
        \\
        & = p^t n/(2k+1) + (1-p^t)(1-2k/n).
    \end{align*}
    Hence invariant~\eqref{eq: faster invariant} is maintained.
    Now suppose for contradiction that there are more than $T$ iterations.
    After iteration $T$, the invariant implies that $\lambda_T \le \lambda^*$,
    and by the choice of $p=1-1/k<e^{-1/k}$ and $T\ge k\ln(n^2/(2k(2k+1))$ we have
    \[
        \capped(\lambda_T, C_T)
        \le e^{-T/k} \frac{n}{2k+1} + 1-\frac{2k}n
        < \frac{2k(2k+1)}{n^2}\times \frac{n}{2k+1} + 1-\frac{2k}n
        = 1,
    \]
    so the loop terminates after iteration $T$,
    contradicting that there are more than $T$ iterations.
    Hence the pair $(\lambda_t, C_t)$ returned by the algorithm
    is as claimed in \autoref{thm: faster}.

    \paragraph{Run time.}
    The loop makes at most $T=O(k\log(n/k))$ iterations.
    To show the claimed time bound,
    we show that each loop iteration can be implemented
    using $O(\log m)$ iterations of binary search,
    each of which takes $O(m)$ time.

    For each center/customer pair $(i,j)\in W\times U$ define
    $\beta_{ij} = (2k+1)(1-2k/n) c_{ij}$.
    Define $(\beta_1, \beta_2, \ldots, \beta_N)$
    to be the set $\{\beta_{ij} : i \in U, j \in W \} \cup\{\beta_0, \beta_{\infty}\}$
    of \emph{breakpoints}, in increasing order.
    Note that $\beta_1=0$, $\beta_N=\infty$, and $N\le m+2$.

    In a given loop iteration $t\le T$,
    compute $\lambda_t$ as follows.

    Given any $\lambda$,
    we can query the condition ``$\lambda_t \le \lambda$'' in $O(m)$ time,
    using that $\lambda_t \le \lambda$
    if and only if $\min_{i \in U} \capped(\lambda, C\cup\{i\}) \le \tau$.
    Using this, first check whether  $\lambda_t \le \lambda_{t-1}$.
    If it holds we have found $\lambda_t$ (as $\lambda_t = \lambda_{t-1}$),
    so assume $\lambda_t > \lambda_{t-1}$.
    Use $O(\log m)$ iterations of binary search
    (checking $\lambda_t \le \beta_\ell$ for some $\ell\in[N-1]$ in each of these iterations)
    to find $\ell\in[N-1]$ such that
    that $\beta_\ell < \lambda_t \le \beta_{\ell+1}$.

    The capped cost $\capped(\lambda, C\cup\{i\})$ implicitly defines a ``partial'' assignment
    that assigns each customer $j$ to its closest center $g_j = \arg\min_{g\in C\cup\{i\}} c_{gj}$ in $C\cup\{i\}$
    if $c_{g_j,j}(1-2k/n)/\lambda < 1/(2k+1)$, and otherwise leaves $j$ unassigned.
    The open interval $(\beta_\ell, \beta_{\ell+1})$ contains no breakpoints,
    so this assignment is the same for all $\lambda$ in this interval.
    Hence, letting $ u(i)$ be the number of customers not assigned by the assignment,
    and letting $c'(i)$ denote the total cost of just the assigned customers,
    for any $\lambda$ in $[\beta_\ell, \beta_{\ell+1}]$ the capped cost
    $\capped(\lambda, C\cup\{i\})$ equals $c'(i)(1-2k/n)/\lambda + u(i)/(2k+1)$.
    Hence,
    \begin{align*}
        \lambda_t
        &\textstyle = \min\big\{ \lambda \ge \lambda_{t-1} :
        \min_{i \in U} \capped(\lambda, C\cup\{i\}) \le \tau\big\}
        \\
        &\textstyle = \min\big\{ \lambda\ge \lambda_{t-1} :
        \min_{i \in U} c'(i)(1-2k/n)/\lambda + u(i)/(2k+1) \le \tau  \big\}
        \\
        &\textstyle = \min\big\{ \lambda\ge \lambda_{t-1}:
        \min_{i \in U} c'(i)(1-2k/n)/(\tau - u(i)/(2k+1)) \le \lambda \big\}
        \\
        &\textstyle = \max\big(\lambda_{t-1},\, \min_{i \in U} c'(i)(1-2k/n)/(\tau - u(i)/(2k+1))\big).
    \end{align*}

    After computing $u(i)$ and $c'(i)$ for all $i \in U$ in $O(m)$ time (total),
    the algorithm computes $\lambda_t$ (as the right-hand side above) in $O(m)$ time.
    (In the case $\min_{i \in U}  u(i)=n$,
    this gives $\lambda_t = \infty$.)
    Given $\lambda_t$, it then computes the center $i'$ to add in $O(m)$ time.
    This proves \autoref{thm: faster}.
\end{proof}

\section{The capped cost is a pessimistic estimator}\label{sec: intuition}
This section gives some intuition for the somewhat mysterious capped cost
$\capped(\lambda, C)$.
Briefly, it is a pessimistic estimator from the analysis of a natural random experiment.

\begin{definition}
    \label{def: partial assignment}
    A \emph{partial assignment} is a vector $a\in( U\cup\{\none\})^W$, with the interpretation
    that $a_j$ is the center assigned to customer $j\in W$, or $a_j=\none$ if $j$ has no assigned center.
    Let the \emph{cost} of $a$ be $c(a) = \sum_{j: a_j\ne\none} c_{a_j, j}$.
    Let $ u(a)=|\{j\in W : a_j = \none\}|$ denote the number of unassigned customers.
\end{definition}

Consider the partial rounding scheme in \autoref{fig: existence}.
It takes as input the instance $c$ and the optimal solution $(x^*, y^*)$ to the LP.
It rounds the fractional solution to a partial assignment
by sampling $T=\lceil k\ln (n^2/(2k(2k+1)))\rceil$ times
(with replacement) from the distribution $x^*/k$,
and, with each sampled center $i$, assigning (or reassigning) each customer $j \in W $
to $i$ with probability $y^*_j/x^*_i \le 1$.
In this way the rounding scheme maintains a partial assignment $a$.
Note that $a$ can assign a customer to a center that is not its closest open center.
It returns the partial assignment resulting from $T$ such iterations.

Let random variable $A$ be the partial assignment returned by $\sample(x^*, y^*)$.

\begin{figure}
    \begin{steps}
        \step[]{\hspace*{-1.5em}$\sample(c, x^*, y^*)$}
        \stepcomment{--- $(x^*, y^*)$ is optimal solution to the LP}
        \step assign $a_j\gets\none$ for $j\in W$
        \stepcomment{--- initialize partial assignment $a$}
        \step do the following steps $T$ times:
        \begin{indented}
            \step choose center $i \in U$ randomly from distribution $x^*/k$
            \step for each customer $j \in W $: with probability $y^*_{ij}/x^*_i$, reassign $a_j \gets i$
        \end{indented}
        \step return $a$
    \end{steps}
    \caption{Rounding to a partial assignment by sampling.}
    \label{fig: existence}
\end{figure}

\begin{lemma}
    \label{lemma: pre first loop}
    $\E[c(A)] \le \lambda^*$
    and
    $\E[ u(A)] < 2k(2k+1)/n.$
\end{lemma}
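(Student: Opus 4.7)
The plan is to analyze the random partial assignment $A=\sample(c,x^*,y^*)$ one customer at a time, exploiting the $T$-fold independence of iterations together with LP feasibility of $(x^*,y^*)$. The first step is routine: for a fixed customer $j\in W$ and a fixed iteration, the probability that $j$ is reassigned to center $i$ in that iteration is the product of the two sampling probabilities, $(x^*_i/k)(y^*_{ij}/x^*_i)=y^*_{ij}/k$. Summing over $i\in U$ and using $\sum_i y^*_{ij}=1$ from the LP gives that $j$ is reassigned in any given iteration with probability exactly $1/k$, and that, conditioned on a reassignment, the new value of $a_j$ follows the distribution $y^*_{\cdot j}$ on $U$.

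From this, $\Pr[a_j=\none]=(1-1/k)^T$ by independence of iterations. The unassignment bound follows from $1-1/k<e^{-1/k}$ (strict for $k\ge 2$) and the choice $T\ge k\ln(n^2/(2k(2k+1)))$: this gives $(1-1/k)^T<e^{-T/k}\le 2k(2k+1)/n^2$, and summing over the $n$ customers yields $\E[u(A)]<n\cdot 2k(2k+1)/n^2=2k(2k+1)/n$.

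For the cost bound, the key observation is that $a_j$, conditioned on $a_j\ne\none$, is distributed as $y^*_{\cdot j}$. This holds because the $T$ attempts are i.i.d.\ and, in each attempt, the assigned center (conditioned on a reassignment occurring in that iteration) is drawn from $y^*_{\cdot j}$; the final $a_j$ is the center chosen in the last successful attempt, and by exchangeability of the iterations this center has the same marginal $y^*_{\cdot j}$. Therefore
\[
    \E\bigl[c_{a_j,j}\,\mathbf{1}[a_j\ne\none]\bigr]
    = \Pr[a_j\ne\none]\sum_{i\in U} y^*_{ij}\,c_{ij}
    \le \sum_{i\in U} y^*_{ij}\,c_{ij},
\]
and summing over $j\in W$ gives $\E[c(A)]\le c\cdot y^*=\lambda^*$.

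The main obstacle, such as it is, is justifying that the \emph{last} successful reassignment has the unconditional distribution $y^*_{\cdot j}$ rather than something skewed by the ``last'' conditioning---but this reduces to the fact that conditioning on ``no reassignment in later iterations'' involves only later iterations' randomness, which is independent of the center chosen in the present iteration given that a reassignment happened here. Everything else---the per-iteration probability computation, the strictness of $(1-1/k)^T<e^{-T/k}$, and passing expectation through the sum over $j$---is routine bookkeeping.
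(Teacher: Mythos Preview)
Your argument is correct and follows essentially the same route as the paper's proof sketch: both compute the per-iteration reassignment probability as $1/k$ via $\sum_i y^*_{ij}=1$, identify the conditional law of the assigned center as $y^*_{\cdot j}$, bound $\Pr[a_j\ne\none]\le 1$ for the cost, and use $(1-1/k)^T<e^{-T/k}\le 2k(2k+1)/n^2$ for the unassigned count. Your explicit justification that the \emph{last} successful reassignment inherits the distribution $y^*_{\cdot j}$ (via independence of later iterations from the center drawn in the current one) is a welcome elaboration of a point the paper merely asserts by citation.
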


\begin{proof}[Proof sketch.]
    As shown in~\cite{young_k_medians_2000},
    for any center $i \in U$ and customer $j \in W $
    the probability that any single iteration assigns customer $j$ to some center is $1/k$.
    Also, given that $j$ is assigned to some center (in any iteration),
    the expected cost of the assignment is $\sum_i y^*_{ij} c_{ij}$,
    so
    \begin{align*}
        \E[c(A)]
        & \textstyle {} = \sum_{ij} \Pr[a_j \ne \none] \times \Pr[a_j = i \giv a_j \ne \none] \, c_{ij}
        \\
        & \textstyle {} \le \sum_{ij} \Pr[a_j = i \giv a_j \ne \none] \, c_{ij}
        \\
        & \textstyle {} = \sum_{ij} y^*_{ij} \,c_{ij}
        \textstyle {} \,=\, c\cdot y^* \,=\, \lambda^*.
    \end{align*}

    Meanwhile, the probability that a given customer $j \in W $ is never assigned is
    $(1-1/k)^T < \exp^{-T/k} \le 2k(2k+1)/n^2$.
    The desired bound $\E[ u(A)] < 2k(2k+1)/n$ follows by linearity of expectation.
\end{proof}

\begin{lemma}
    \label{lemma: existence precursor}
    With positive probability,
    $A$ has cost $c(A) \le \lambda^*/(1-2k/n)$
    and $ u(A) \le 2k$.
\end{lemma}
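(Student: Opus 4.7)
The plan is to combine the two expectation bounds of \autoref{lemma: pre first loop} into a single pessimistic estimator and conclude via the probabilistic method. Specifically, I would define
\[
    \phi(a) \;=\; (1-2k/n)\, c(a)/\lambda^* \;+\; u(a)/(2k+1),
\]
treat $\phi(A)$ as a non-negative random variable, and establish two things: (a) $\E[\phi(A)] \lt 1$, and (b) any partial assignment $a$ with $\phi(a) \lt 1$ satisfies both bounds claimed in the lemma.

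For (a), linearity of expectation together with \autoref{lemma: pre first loop} gives
\[
    \E[\phi(A)] \;\le\; (1-2k/n)\cdot \lambda^*/\lambda^* \;+\; \frac{2k(2k+1)/n}{2k+1} \;=\; 1,
\]
with strict inequality because the bound $\E[u(A)]\lt 2k(2k+1)/n$ is itself strict. For (b), if $u(a) \gt 2k$ then (because $u(a)$ is integer-valued) $u(a)\ge 2k+1$, so $u(a)/(2k+1)\ge 1$ and $\phi(a)\ge 1$; similarly, if $c(a)\gt \lambda^*/(1-2k/n)$ then $(1-2k/n)c(a)/\lambda^*\gt 1$ and again $\phi(a)\gt 1$. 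Contrapositively, $\phi(a)\lt 1$ forces both $u(a)\le 2k$ and $c(a)\le \lambda^*/(1-2k/n)$. From (a), a positive-probability subset of outcomes satisfies $\phi(A)\lt 1$, which yields the lemma.

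The only real subtlety is lining up strict and weak inequalities: the argument genuinely needs the strict bound on $\E[u(A)]$ to push $\E[\phi(A)]$ strictly below $1$ so that the probabilistic method applies. Beyond that, the proof is essentially a weighted Markov/union-bound argument. As a bonus, $\phi(a)$ decomposes as a sum over customers $j$ of a per-customer term equal to $(1-2k/n)c_{a_j,j}/\lambda^*$ when $a_j\ne\none$ and $1/(2k+1)$ otherwise, which is exactly the partial-assignment analogue of $\capped_j(\lambda^*, C)$; this is the sense in which the capped cost is the natural pessimistic estimator for the sampling scheme \sample{}, matching the heading of \autoref{sec: intuition}.
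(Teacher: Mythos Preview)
Your proposal is correct and essentially the same argument as the paper's: the paper applies a union bound followed by Markov's inequality to bound $\Pr[c(A)\ge\lambda^*/(1-2k/n)\text{ or }u(A)\ge 2k+1]$ by exactly your quantity $\E[\phi(A)]$, then uses \autoref{lemma: pre first loop} to show this is strictly less than~$1$. Your packaging via the pessimistic estimator $\phi$ is a cosmetic reformulation of the same computation; the only thing you omit is the paper's one-line remark that the degenerate cases $\lambda^*\in\{0,\infty\}$ (where $\phi$ is not well defined as written) must be handled separately.
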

\begin{proof}[Proof sketch.]
    Assume $0 <\lambda^* < \infty$.
    (The other cases are easy to verify.)
    \begin{align}
        & \hspace*{-1em}\Pr\big[c(A) \ge \lambda^*/(1-2k/n) \text{ \,or\, }  u(A) \ge 2k+1\big]
        \notag\\[3pt]
        & {} \le\,
        \Pr[c(A) \ge \lambda^*/(1-2k/n)] ~+~  \Pr[ u(A) \ge 2k+1]
        && (\text{naive union bound})
        \notag\\
        & {} \le\,
        \E[c(A)] \frac{1-2k/n}{\lambda^*}
        ~+~
        \E[ u(A)] \frac 1 {2k+1}
        && (\text{Markov bound})
        \label{eq: tilde c bound}
        \\
        &{} <\, 1-2k/n ~+~ 2k/n \,=\, 1
        && (\text{\autoref{lemma: pre first loop}}).
        \notag\qedhere
    \end{align}
\end{proof}

The proof of the lemma bounds the probability of the two ``bad'' events
by the expectation of a pessimistic estimator,
then shows that the expectation of that pessimistic estimator is less than 1.
Instead of using \autoref{lemma: existence precursor} directly,
we work directly with the pessimistic estimator,
which is the capped cost $\capped(\lambda, A)$ for $\lambda=\lambda^*$.

In fact,
the greedy algorithm in the first phase of the slow algorithm
can be obtained by applying the method of conditional probabilities
to the random-sampling rounding scheme
(following the approach initiated in \cite{young_randomized_1995})
to find an outcome with $\capped(\lambda^*, A) < 1$.

\section{Conclusion}\label{sec: conclusion}
To conclude we remark on the LP dual solutions implicitly generated by the algorithm,
sketch how the algorithms compare to those in previous works,
and give some open problems.

\paragraph{Implicit LP dual solutions.}
As expected,
the algorithms presented here implicitly define solutions to the dual of the standard $k$-Median LP relaxation
(\autoref{fig: k median LP}).
The proven approximation ratios hold with respect to the dual solution cost.

Briefly, rewriting the bound in \autoref{lemma: submodular}, it is
\begin{align*}
    \lambda^*
    & \ge
    \frac{\lambda}{1-2k/n}
    \big(\capped(\lambda, C) -
    k \max_{i\in U} [\capped(\lambda, C) - \capped(\lambda, C\cup\{i\})]\big).
\end{align*}

This lower bound is equivalent to weak duality
for the feasible dual solution $(\delta, \pi,\mu)$ defined by
\begin{align*}
    \delta_j
    & \textstyle = \frac{\lambda}{1-2k/n} \capped_j(\lambda, C)
    = \min\big(\frac{\lambda}{(1-2k/n)(2k+1)}, \min_{i\in C} c_{ij}\big),
    \\ \pi_{\ij} & = \max(0, \delta_j - c_{ij}),
    \text{ and } \mu = \max_{i\in U} \textstyle\sum_j \pi_{ij}.
\end{align*}

Each iteration of the first phase,
the current pair $(\lambda, C)$ defines a feasible dual solution
whose cost is a lower bound on $\lambda^*$.
The cost of the final primal solution is at most
the maximum cost of any of these dual solutions,
which is in turn at most $\lambda^*$.
This can be shown by mechanically recasting the relevant invariants in terms of the dual solutions.

\paragraph{Comparison to previous works.}
The bicriteria-approximation algorithm for $k$-Median in~\cite[\S 6]{young_k_medians_2000}
takes as input an instance $c$, an $\eps>0$, and any upper bound $\lambda$ on the optimal fractional cost $\lambda^*$.
It returns a solution of cost at most $(1+\eps)\lambda$ of size at most $1 + k\ln(n + n/\eps)$.
(Note that the size is $\Omega(k \ln n)$ regardless of $\eps$.)
That algorithm is derived by derandomizing a natural random-sampling rounding scheme.

The first $O(\log n)$-size approximation algorithm, in~\cite[Theorem 5]{chrobak_incremental_2008},
requires as input the instance \emph{and} the optimal cost $\lambda^*$.
It calls the algorithm from~\cite{young_k_medians_2000} with $\eps=1/n$ and $\lambda=\lambda^*$
to obtain a set $C$ of centers, then returns $C\cup\{i\}$,
where $i$ is chosen to minimize the cost of $C\cup\{i\}$.
Assuming without loss of generality that every customer has a center with assignment cost zero,
this reduces the cost by at least a factor of $1-1/n$, reducing the cost below $\lambda^*$.

The first phase of our first algorithm can be obtained by derandomizing the rounding scheme
from~\cite{young_k_medians_2000},
but with respect to a different analysis,
so ours makes fewer iterations and minimizes a different function.
Then, instead of adding just one center to bring the cost down,
our polishing step (which is the key to obtaining the stronger approximation ratio) adds $2k$ centers.
Finally, all the previous size-approximation algorithms require solving the LP.
Our faster algorithm avoids this as described in \autoref{sec: faster alg}.

\paragraph{Open problems.}
Is there a polynomial-time algorithm with size-approximation ratio $\alpha \ln (n/k) + o(\log (n/k))$
for some constant $\alpha < 2$?

In the more general \emph{weighted} $k$-Median problem,
each center $i$ is given a weight $w_i \ge 0$,
and the set of centers must have \emph{total weight} at most $k$,
rather than \emph{size} at most $k$.
Does weighted $k$-Median have a polynomial-time
$O(\log (n/k))$-size approximation algorithm?
The result in~\cite{young_k_medians_2000} extends to this problem.

For the closely related Facilities Location problem,
the best polynomial-time approximation algorithm
returns a solution of cost at most
$c\cdot y^* + H_{\Delta} f\cdot x^*$,
where $(x^*, y^*)$ is an optimum solution to the standard LP relaxation for Facilities Location.
That is, it achieves ratio $H_\Delta$ with respect to the opening costs,
and ratio 1 with respect to the assignment costs.
This algorithm is relatively slow because it must solve the LP to obtain $(x^*, y^*)$.
Is there a faster greedy algorithm with the same performance guarantee?


\appendix

\section{Appendix}\label{sec: appendix}
\begin{proof}[Proof of \autoref{lemma: alpha}]
    \begin{align*}
        \alpha_{kn} = T/k+2
        & \le \ln(n^2/(2k(2k+1)) + 1/k + 2
        && (\text{by defn of } T)
        \\
        & = 2\ln(n/k)  + 2 - 2\ln 2 + 1/k + \ln(1 - 1/(2k+1))
        \\
        & \le 2\ln(n/k) + 2 - 2\ln 2 + 1/k - 1/(2k+1)
        && (\ln(1+z) \le z)
        \\
        & = 2\ln(n/k) + 2- 2\ln 2 + 1/(2k) + 1/(2k(2k+1)) && \qedhere
    \end{align*}
\end{proof}

\medskip

\begin{proof}[Proof of \autoref{lemma: submodular}]
    Consider the following random experiment, from~\cite{young_k_medians_2000}.
    Let $(x^*, y^*)$ be an optimal solution (of cost $\lambda^*$)
    to the $k$-Median LP relaxation (\autoref{fig: k median LP}).
    Choose a center $i'\in U$ randomly from the distribution $x^*/k$,
    then, for each customer $j\in W$ independently,
    reassign $j$ to $i'$ (in place of whatever current assignment it has)
    with probability $y^*_{i'j}/x_i$.

    As observed in~\cite{young_k_medians_2000},
    for any customer $j \in W $,
    the probability that $j$ is reassigned is $1/k$,
    and the expected cost of $j$'s new assignment, given that it is reassigned, is $\sum_{i=1}^n y^*_{ij} c_{ij}$,
    so
    \begin{align*}
        \min_{i\in U} \capped(\lambda, C\cup\{i\})
        & \le \E[\capped(\lambda, C\cup\{i'\})]
        = \sum_{j\in W} \E[\capped_j(\lambda, C\cup\{i\})]
        \\
        & \le
        \sum_{j\in W} \textstyle
        (1-1/k)\capped_j(\lambda, C)
        + (1/k)\sum_{i} y^*_{ij} (1-2k/n) c_{ij}/\lambda
        \\
        & =
        (1-1/k)\capped(\lambda, C)
        + (1/k)(1-2k/n) \lambda^*/\lambda.
        \qedhere
    \end{align*}

\end{proof}

\bibliography{99_bib}
\bibliographystyle{siamplain}

\end{document}